\newtheorem{theorem}{THEOREM}
\newtheorem{proof}{PROOF}
\title{A study on bribery networks with a focus on harassment bribery and ways to control corruption}
	\author{Chanchal Pramanik \thanks{Email id: cp629@cornell.edu; Tata-Cornell Institute has supported the research work. This paper is under review by the Journal of Public Economic Theory.}}
\date{Cornell University, Ithaca, New York}
\begin{document}

\maketitle

\begin{abstract}
The paper focuses on the bribery network emphasizing harassment bribery. A bribery network ends with the police officer whose utility from the bribe is positive and the approving officer in the network. The persistent nature of corruption is due to colluding behavior of the bribery networks. The probability of detection of bribery incidents will help in improving controlling corruption in society. The asymmetric form of punishment and award equivalent to the amount of punishment to the network can enhance the probability of detection of harassment bribery $(p_{h})$ and thus increasing the probability of detection of overall bribery $(p_{h} \in p)$.
\end{abstract}

\section{Introduction}

Corruption is part of our society for ages. It has many forms. Bribery is one form of corruption that enables an individual to get an advantage in unfair means. It should not be confused with harassment bribe (Basu, 2011) or extortion bribe (Lindgren, 1992). This type of bribe can be two types, harassment or extortion by threats or fear and seeking a corrupt payment by a public official because of the officer's ability to influence an official action (Lindgren, 1992). The harassment bribe under the color of office is studied in this paper, emphasizing on the police system in developing countries perspective, especially India. 

The effectiveness of the police system has a significant influence in shaping the behavior of society, and thus economic welfare. It is employed by the government to establish law and order and control corruption in our society. But it is unfortunate to get influenced by the police system to promote corruption in society. This type of malpractice is rampant in developing countries, like India. The country ranked 80 in the world and ranked 14 in the Asia Pacific region (Transparency International, 2019). The recent survey by the organization of sample size 2,000 in the period June-July 2020 reveals that the bribery rate in India is 39\% (Transparency International, 2020). In that survey, it is estimated that having contacted with police in the last 12 months, 42\% of Indians paid a bribe to a police officer in order to get assistance or to avoid a problem like passing a checkpoint or avoiding a fine or arrest. Similarly, 32\% Indians paid a bribe to a judge or court official. It is evident that the incidence of corruption in the judicial system is quite high in India with similar observations in other developing countries in the Asia Pacific region.

The paper also envisions developing a judicial system to limit the growth of the bribery network and explores how to nudge the law keepers, to be honest in delivering their services. The dishonest and selfish behavior of administrations diverts the society into a bad equilibrium, thus encouraging breeding grounds for new forms of corruption and injustice. This encourages erosion of trust in a society which degrades the environment for economic activities. Gradually, society under-perform in human development indices. The erosion of the moral behavior of a society affects the communities interconnected with the society. The modern-day digital world connected through social media amplifies the information flow and influence our actions and thus shape societies rapidly.

It is evident from the literature that bribery in the police system can be controlled in various ways and governments have raised multiple steps in controlling it. The recursive behavior in sharing of bribery amounts keeps the bribery cases confidential in the police system, thus it establishes a norm of giving a bribe in society. This paper analyzes the characteristics of the bribery network through Nash bargaining and game-theoretic analyses.

The model developed in the paper comprises of two sections. Firstly, the bribe size is evaluated by analysis at the network level. Secondly, the bribe size is determined for the police officer at the aggregate level in a specified time period in the network. The results of the analyses are examined under the comparative statics section. The conclusion section discusses the findings and future work.

\section{Literature Review}

The governments' effort to control crime comprises imposing fine, suspension from jobs, or imprisonment. The control of the crime in society is also related to the probability of identifying the crime. Becker (1968) concluded that "crime would not pay is an optimality condition and not an implication about the efficiency of the police or courts; indeed it holds for any level of efficiency, as longs as optimal values of $p$ (i.e., the overall probability that an offense is cleared by conviction) and $f$ (i.e., punishment for offense) appropriate to each level are chosen". This foundation work in the field of law and economics by Gary Becker construct the layout for further analysis on crime and punishment through the lens of economics. The crime, punishment, and probability of detection of crime give rise to the incidents of bribery to the law enforcement officers to avoid crime. The bargaining of bribe amount with a given probability of detection and size of fine can be analyzed through the setup of the bargaining problem (Nash, 1950). This paper analyzes the incidents of bribery and harassment bribery on the foundation of these two papers, Becker (1968) and Nash (1950).

Corruption is prevalent in developing countries and has a detrimental impact on the economic welfare of societies. Corruption as defined by Basu (2016), "can take many forms but, in the final analysis, it is a form of violation of the law, perpetrated either individually or in cahoots with state officers and enforcers of the law, as happens in cases of bribery". Basu (2016) further argued that law enforcement officials are susceptible to bribery, which leads to the recursive problem in bribery literature. The model capturing the recursive problem in the bribery incidents give birth to the bribery network, the characteristics of the network can be understood by developing a model using Nash bargaining setup (Basu, Bhattacharya, and Mishra, 1992). The analysis of the bribery models will reveal useful insights into the network and suggest ways to control the bribery incidents. The discussions of the issues in the literature are compliant with the author's experience on passport application where the inquiring police officer revealed the existence of the bribery network and optimal bribe size in 2007, India. It is an important issue to study to maintain law and order in society and encourage economic growth.

The growth of bribery incidents (Transparency International, 2019 and 2020) reveals the pervasive nature of corruption in society. The persistence of the pervasive corruption is due to low compliance in society, driving the society where corruption becomes a social norm (Mishra, 2006). This also refers to the existence of multiple equilibria, where it is more expensive to become honest in a corrupt environment (Basu, Bhattacharya, and Mishra, 1992). The ambiance of a corrupt environment in society gives rise to the harassment bribery where the bribe giver is forced to give a bribe. To control harassment bribery asymmetric punishment will empower the bribe givers to expose the incidents and will help in controlling corruption in society (Basu, 2011).

A more detailed study on asymmetric punishment for bribery reveals that asymmetric punishment eliminates bribery only when the cost of whistle blowing is cheap (Basu, Basu, and Cordella, 2016). The paper further reports that the bribe size increases as the expected penalties rise but bribery eliminates if the expected penalties are sufficiently high. The bribery incidents are detected by imposing elite police officers in the system placing an audit mechanism on a periodic basis (Basu, Bhattacharya, and Mishra, 1992). Given the measures, in place to control corruption, there is evidence of persistence of corruption, even in severest punishment (i.e., execution) environment as in China. This may be due to colluding behavior of the bribery network, leading to many competing bribe sizes and multiple equilibria with different levels of corruption (Bardhan, 2015).

\section{The Problem}

An ordinary citizen applied for a service (say, passport, driving license, etc.) to the government and eligible to receive the service. The citizen needs to go through a document verification process conducted by a police officer or a government official. The successful verification of the documents will issue the requested document from the government. But in a real-life situation, ordinary citizens are forced to pay a fee (i.e., a bribe or a harassment bribe) to the interacting government employee to pass the document verification process.

Let us define the problem based on the interaction between the police officer and the citizen in a story form, which will present the ambiance of the bribery network understanding of the crisis to find solutions for the problem. Suppose the citizen is a student in economics and waiting for his document verification process of passport by the local police authority. The student met a corrupt police officer and he demanded a bribe to approve the student's documents to issue the passport. The student declined to pay the bribe, though the police officer lowered the bribe amount. The police officer tried to persuade the student by telling him that the bribe amount needs to be shared with another officer(s) to clear the police verification report. The police officer added, without the bribe amount the police verification report cannot be cleared, and hence passport will not be issued. That information encouraged the student to study the literature around harassment bribery and design a theoretical analysis to help him decide the best strategy to combat the situation. 

The police officer tried to help the student by negotiating the bribe amount, but could not waive out the bribe amount. The police officer was forced to collect and share the bribe in the system, being part of the network and job. The phenomenon can be well understood by simple game-theoretic analysis. Let the approving police officer was a dishonest person who requires a bribe for his welfare and/or to accumulate money to keep the bribery network active. Thus, the subordinate police officer needs to collect a bribe, otherwise, his payoff is negative to clear the document verification process. The citizens need to pay the bribe to avoid harassment, which is costlier than the bribe amount. Thus, charging bribe and treating it as a fee in society is the only Nash equilibrium both from the sub-ordinate police officer's and citizen's perspectives.

The bribery activities are supported by a network of police officers in the order of their ranks. Thus, negotiation on bribery activities is with the citizen and the bribery network. Given the corrupted networking system it is challenging for the student or ordinary citizen to find a way to avoid giving a bribe to receive a passport from the government. This paper attempts to find the answers to this question, in particular.

\section{Notations}

Let a sub-ordinate police officer $(O_{1})$ charged a bribe $(B_{i})$ to a citizen $(C)$. The minimal bribe amount that can be offered is $(B_{i}')$. Let the payoff obtained on receiving the passport by the citizen is $P \in [0,P]$, whereas the police officer $(i)$ receives $B_{i} \in [0,\mathbb{B}_{i}]$ from bribe, if demanded. The expected total bribe amount in a certain period is $\mathbb{B}_{i}$. Given the bribe network exist, let the share of the bribe, $x \in [0,1]$, will be received by the interacting police officer $(O_{1})$ and $(1-x)$ of the bribe $B_{i}$ will be shared with another police officer in a sequence of the rank, and it follows.

The government's regulation mechanism to control the bribe network is done by imposing a fine $(F_{O})$ on the convicted officer and a fine $F_{C}$ on the convicted citizen if caught in the bribing process. The punishment is symmetric if $F_{O}=F_{C}$ and asymmetric is $F_{O}>F_{C}$. Let the probability of detection of bribe is $p \in [0,1]$ and harassment bribe is $p_{h} \in p$. Again, if the bribery incident is perceived and recognized as a harassment bribe the citizen is awarded an amount $f(B_{i}, F_{O})>F_{C}$. 

\section{The Model}

\subsection{The Bargaining setup}

The standard Nash bargaining setup has been used to measure the bribe size at network level and at aggregate level in a certain time period in the network for a police officer. The estimations are under following two subsections.

\subsubsection{Bargaining setup to determine bribe size by analysis at network level}

The Nash Bargaining setup in determining the size of the bribe is based on the utilities of the citizen and the participating police officers in the bribery network and incidences of harassment bribe and normal bribe, which includes harassment bribe.

The citizen's (C) utility function is given by,
\begin{equation}
\begin{split}
U_{C} (B_{i})&=P-B_{i}-pF_{C}+p_{h}f(B_{i},F_{O}) \\
&=P-B_{i}-pF_{C}+p_{h}(B_{i}+nF_{O})
\end{split}
\end{equation}

Assuming a linear award function for the citizen for harassment bribery incident, $f(B_{i},F_{O}) = B_{i}+nF_{O}$. The harassment bribe is a more serious crime committed by the police department. The detection of such incidents awards the bribe giver by returning the bribe amount and the fine collected from the officers involved in the bribery incidents. If the harassment bribe is claimed but not exchanged, then the citizen receives the passport $(P)$ and an award of $(nF_{O})$ collected from the police officers in the verified bribery network $(O_{1}, O_{2},..., O_{n})$, which is charged as fine to harass the citizen and support the bribery network. On the other hand, if the harassment bribe is exchanged the citizen receives the award and charged a fine $F_{C}$ (where, $F_{C}<f(B_{i},F_{O})$) as $p_{h} \in p$. In normal bribery incidents, the citizen is charged with a fine $F_{C}$.

The utility function of the police officer $(O_{1})$ interacting directly with the citizen $(C)$ is
\begin{equation}
U_{O_{1}} (B_{i})=xB_{i}-p(F_{O}+B_{i}))
\end{equation}

where, $B_{i}$ denotes the original amount shared with the first interacting sub-ordinate police officer $O_{i}$.

Let the police officers receive share $x$ of the received bribe to clear the file for the passport. The utility function of the police officers $(O_{j})$ interacting with the police officers in the bribery network to approve the citizen's $(C)$ passport verification report is as follows.
\begin{equation}
\begin{split}
U_{O_{2}} (B_{i})&=x(1-x)B_{i}-p[F_{O}+(1-x) B_{i}]\\
U_{O_{3}} (B_{i})&=x(1-x)^2B_{i}-p[F_{O}+(1-x)^2 B_{i}]\\
&...\\
U_{O_{n}} (B_{i})&=x(1-x)^{n-1}B_{i}-p[F_{O}+(1-x)^{n-1} B_{i}]
\end{split}
\end{equation}

The expected utility of the bribery network is derived based on von Morgenstern expected utility function, assuming $\dfrac{1}{n}$ as the probabilities associated with each utilities,
\begin{equation}
\begin{split}
U_{O}(B_{i})&=U_{O}[\dfrac{1}{n}\sum_{j=1}^{n}U_{O_{j}}(B_{i})] = \dfrac{1}{n}\sum_{j=1}^{n}U_{O_{j}}(B_{i}) \\
&=\dfrac{1}{n}[x.\dfrac{1}{x}B_{i}-p(nF_{O}+\dfrac{1}{x}B_{i})] \\
&=\dfrac{B_{i}}{n}-p(F_{O}+\dfrac{B_{i}}{nx})
\end{split}
\end{equation}
Since, $x\in(0,1)$, for large $n$, $1+ (1-x) + (1-x)^2 + ...+(1-x)^n=\dfrac{1}{x}$.

The comparison of utilities for a successful bribery network is summarized under Axiom 1.
\subsection*{Axiom 1:}
\begin{enumerate}
	\item The bribery network exist for $P\geq P^{'}$ such that $U_{C}(B_{i})>0$.
	\item Having the bribe $(B_{i})$ demanded the utility of individual police officers is less than the network utility from the bribe $(B_{i})$. thus individual officers need to share the bribe in the network, given the probability of detection of overall $(p)$ and harassment $(p_{h})$ bribes are non-zero. Explicitly, $U_{O_{j}}(B_{i}) \ngtr U_{O}(B_{i})$ as $x(1-x)^{j-1} \ngtr 1$, $x\in (0,1)$.
	\end{enumerate}

The bribe size is determined by the Nash Bargaining setup. Given the bribery network exists, a Nash bargaining solution exists if there is the gain in the transaction of bribes. Thus, the solution is given as follows. It is an encounter of differences of utilities (citizen and bribery network) and the probabilities of detection associated with the bribes in a bargaining situation\footnote{In the competing bargaining models the literature support the wider domain of applicability of the Nash bargaining set-up (Kalai and Smorodinsky, 1975; Anant, Mukherji, and Basu, 1990).}.

\begin{equation}
B_{i}^{*} \equiv \underset{B_{i}}{\operatorname{argmax}} [U_{C}(B_{i})-U_{O} (B_{i})][pB_{i} - p_{h}B_{i}]
\end{equation} 

At equilibrium the bribe size is as follows:
\begin{equation}
B_{i}^{*}=\dfrac{nx[P+p(F_{O}-F_{C})+p_{h}nF_{O}]}{2[nx(1-p_{h})+x-p]}
\end{equation}

The corresponding utilities by substituting the derived bribes size are as follows:
\begin{equation}
U_{C}(B_{i}^{*})=P+p_{h}F_{O}-pF_{C}-\dfrac{nx(1-p_{h})[P+p(F_{O}-F_{C})+p_{h}nF_{O}]}{2[nx(1-p_{h})+x-p]}
\end{equation}
\begin{equation}
U_{O}(B_{i}^{*})=\dfrac{(x-p)[P+p(F_{O}-F_{C})+p_{h}nF_{O}]}{2[nx(1-p_{h})+x-p]}-pF_{O}
\end{equation}

From equation (6), for $B_{i}^{*}>0$, we can draw insights for equation (8). For $x>p$ and $n>2$, $U_{O}(B_{i}^{*})>0$, as $p_{h}\in (0,1)$ and $x \in (0,1)$. Again, increase in the probabilities of bribe detection $p_{h}$ and $p$ negatively impact the utility of the network and individual officers in the network in bribery business, referring equation (6). Progressing in that argument as $p_{h} \in p$, and citizens contribute in increase of $p_{h}$ (i.e., increase in detection of harassment bribery incidents\footnote{One can report the incidents with the help of digital technologies, for example, http://www.ipaidabribe.com/}), the utility from bribe for the officers will decrease, having the penalty $(F_{O}, F_{C})$ and award setup $[f(B_{i}, F_{O})]$ in place. The individual utilities of the police officers are positive only when $\dfrac{B_{i}^{*}}{F_{O}}>\dfrac{p}{(1-x)^{n-1}(x-p)}$ i.e., $p>\dfrac{x}{2+x(n-1)}$, comparing (3) and (6). Hence, for having positive utility for individual police officers and the network, the relation is as follows,
\begin{equation}
x>p>\dfrac{x}{2+x(n-1)}
\end{equation}
In other words, the bribe size is directly proportional to the density of the bribery network, for a given share and the probability of detection. Thus, small bribe size relates to small bribery network.

Similarly, by comparing equations (1) and (9), the following relation can be deduced for $p_{h}$, considering asymmetric punishment i.e., $(F_{O}-F_{C})>0$ and $p_{h}\in p$.
\begin{equation}
\begin{split}
\dfrac{p}{n}&>p_{h}>\dfrac{B_{i}+pF_{C}}{B_{i}+nF_{O}} \\
\implies x>p&>p_{h}>\dfrac{x}{n[2+x(n-1)]}\dfrac{F_{C}}{F_{O}}>\dfrac{x}{2+x(n-1)}
\end{split}
\end{equation}
The relation (10) highlight the fact, $p_{h} \rightarrow p$ as $F_{C} \rightarrow F_{O}$.

Further from (5), it can be observed that $\dfrac{\delta B_{i}^{*}}{\delta F_{O}}>0$ and $\dfrac{\delta B_{i}^{*}}{\delta F_{C}}<0$. This implies that as the bribe size increases the penalty to the officer(s) is more, symmetrically the increase in the fine for the citizens will decrease the bribe size. The growth of bribe size also depends on probabilities, the share of bribe, and the number of police officers in the bribery network. Thus, it is important to study the pattern through comparative statics.

It is assumed, $p_{h} \in p$. Considering the particular case for the event of $B_{i}$, there can be three possibilities - harassment bribery transaction didn't take place but charged, harassment bribery transaction took place and the citizen paid the bribe, and general bribery took place where citizen took advantage by bribing the police officer. Thus, $p$ and $p_{h}$ are treated separately in this context.

Holding $p_{h}$ as constant, and allowing $p$ to vary,
\begin{equation}
\dfrac{\delta B_{i}^{*}}{\delta p}=\dfrac{nx}{2}[\dfrac{x[p_{h}(n-1)+1](F_{O}-F_{C})+(P+p_{h}nF_{O})}{[nx(1-p_{h})+x-p]^2}]\geq 0
\end{equation}

Holding overall probability of bribe detection as $p$ constant, and allowing $p_{h}$ to vary,
\begin{equation}
\dfrac{\delta B_{i}^{*}}{\delta p_{h}}=\dfrac{n^2x}{2}\dfrac{[xp(F_{O}-F_{C})+(nx+x-p)F_{O}+xP]}{[nx(1-p_{h})+x-p]^2} \geq 0
\end{equation}

\begin{equation}
\dfrac{\delta B_{i}^{*}}{\delta n}= \dfrac{x(x-p)[P+p(F_{O}-F_{C})+2p_{h}nF_{O}]+p_{h}n^2x^2F_{O}(1-p_{h})}{2[nx(1-p_{h})+x-p]^2} \geq 0
\end{equation}

\begin{equation}
\dfrac{\delta B_{i}^{*}}{\delta x}=\dfrac{-np}{2}\dfrac{[P+p(F_{O}-F_{C})+p_{h}nF_{O}]}{[nx(1-p_{h})+x-p]^2} \leq 0
\end{equation}

Considering, the penalty for bribe hurts more to the officer than to the citizen (i.e., $F_{O}>F_{C}$), asymmetric punishment, and assuming the utility from the bribe is positive for the police officers (i.e., $(x>p$).

The equations (11) and (12) suggest that bribe size increases with the probabilities of detection and asymmetric punishment $(F_{O}>F_{C})$. Figure 1 and figure 2 demonstrates the findings explicitly. This intuitively, implies that the bribe needs to be shared with other police officers to cover the crime. The intuition is more clear from equation (13), with an increasing number of police officers the bribe size need to be increased to cover the incident of crime. Thus keeping the bribery network active. This naturally, relates to the share of the bribe in the network. The increase in the share of the bribe relates to a decrease in the size of the bribe, equation (14). This relates to the fact the smaller the bribe size larger the share of the bribe of the individual officer, and the smaller the network (i.e., $n$). Again, the smaller bribe sizes are less detectable, as in equations (11) and (12).
\begin{figure}
	\includegraphics[width=\linewidth]{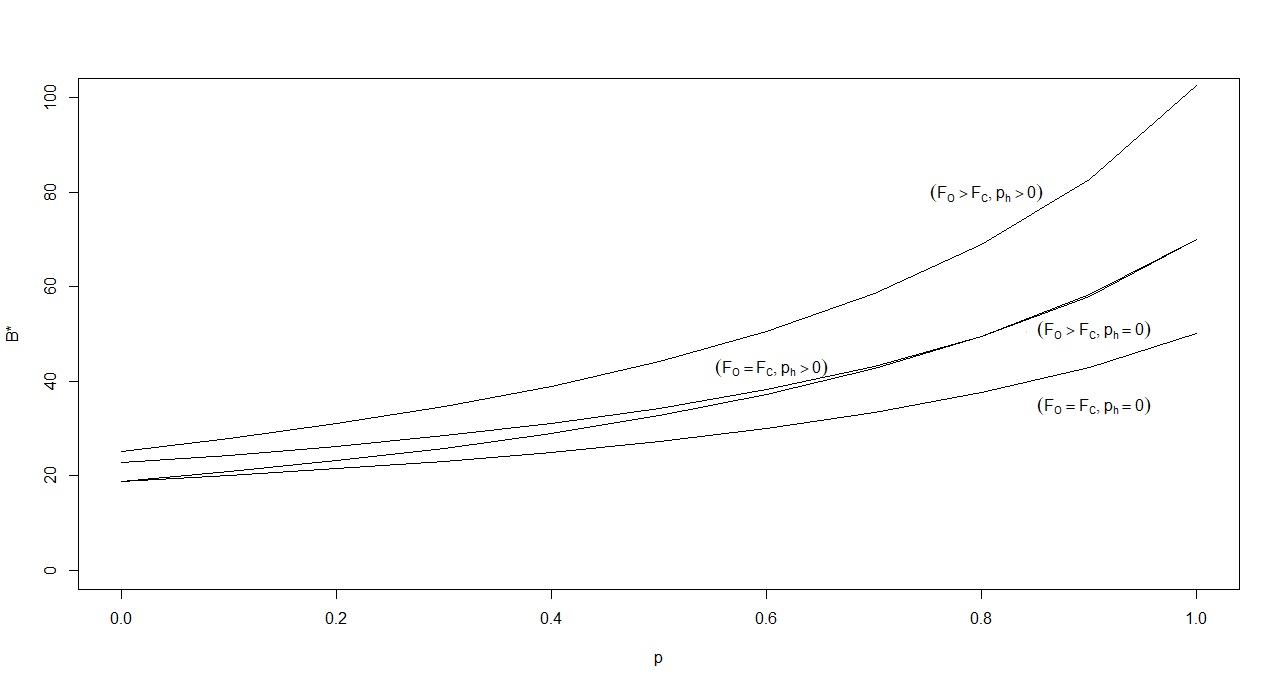}
	\caption{Equilibrium bribe size $(B_{i}^{*})$ as a function to detect probability of bribery $(p)$.}
	\label{fig:plot1}
\end{figure}
\begin{figure}
	\includegraphics[width=\linewidth]{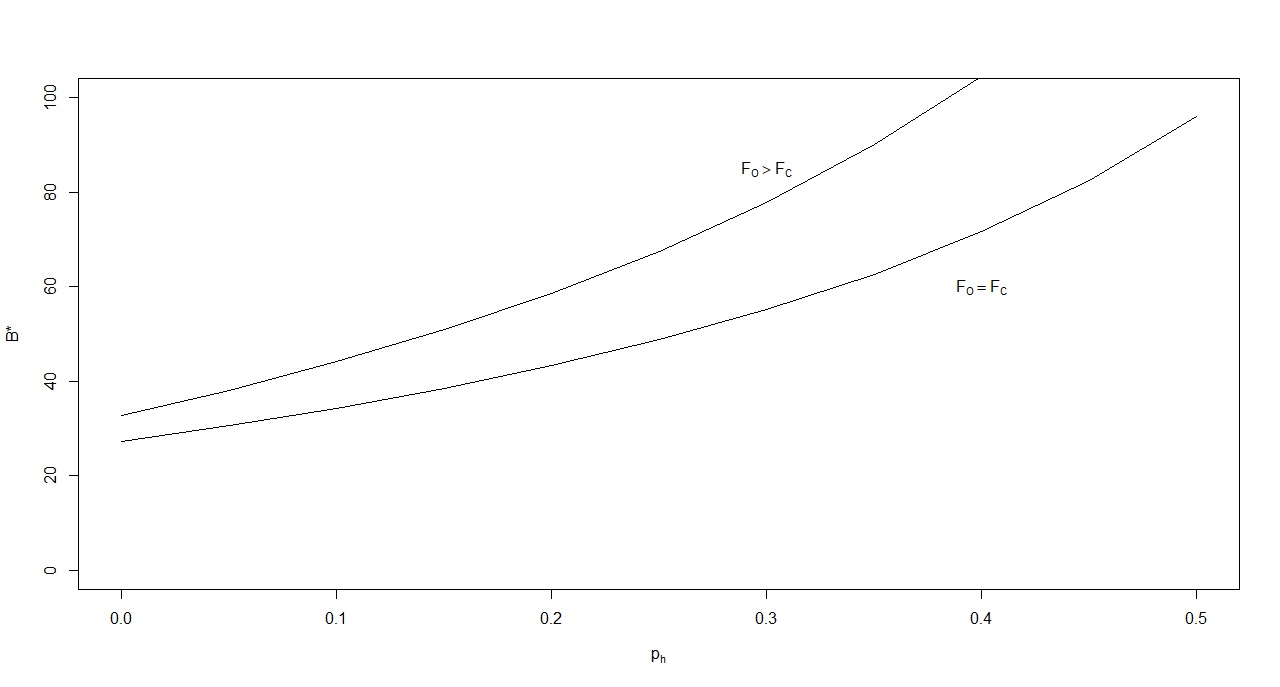}
	\caption{Equilibrium bribe size $(B_{i}^{*})$ as a function to detect probability of harassment $(p_{h})$ holding overall bribery $(p)$ constant at $0.5$.}
	\label{fig:plot2}
\end{figure}

From this analysis, it can be inferred that the bribe network expands and contracts relative to the bribe size. And the share of bribe is inversely proportional to the size of the bribe, considering the equal share of bribe $(x)$ in the network. The findings can be summarized under axiom 2.

\subsection*{Axiom 2:} 
\begin{enumerate}
	\item Considering the bribery network is active, $U_{O}(B_{i})>0, \; U_{C}(B_{i})>0$, that is, $x>p$ and there exist $P\geq P^{'}$ such that the bribe size is strictly increasing with $F_{O}$ and strictly decreasing with $F_{C}$.
\begin{enumerate}
	\item For the probabilities, $p_{h} \in p$, the bribe size is weakly increasing with a probability of detection of bribery $p$ having the probability of detection of harassment bribery $p_{h}$ as constant. Similarly, the bribe size is weakly increasing with a probability of detection of harassment bribery $p_{h}$ having the probability of detection of overall bribery $p$ as constant.
	\item The bribe size is weakly increasing with the number of police officers in-network $n$ and weakly decreasing with a share of bribe size $x$. 
\end{enumerate} 
\item There will be no bribery incidents if $p>x$ and $P<P^{'}$.
\end{enumerate}

The analysis of the bribery network can be summarized under following theorems.

\begin{theorem}
The bribe network ends with the police officer whose utility is positive from the bribe and the approving officer in the network.
\end{theorem}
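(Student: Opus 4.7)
The plan is to read the theorem as asserting two simultaneous endpoint conditions on the terminal node of the chain $O_{1}, O_{2}, \ldots, O_{n}$: the last officer must have strictly positive utility from the bribe, and must also be the approving officer whose signature closes the file. I would prove these two properties in turn and then note that they must coincide, so the chain cannot be extended or truncated any further.

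First, I would work directly from equation (3), rewriting the generic officer's payoff as $U_{O_{j}}(B_{i}) = (1-x)^{j-1} B_{i}(x-p) - p F_{O}$. Under the active-network conditions of Axiom 2, namely $x \in (0,1)$ and $x > p$, the factor $(1-x)^{j-1}$ is strictly decreasing in $j$ while the fixed penalty term $pF_{O}$ is independent of $j$. Hence $U_{O_{j}}(B_{i})$ is strictly decreasing in $j$, and there is a largest index $j^{*}$ with $U_{O_{j^{*}}}(B_{i}) > 0$; for every $j > j^{*}$ the payoff is non-positive. Solving the threshold inequality $(1-x)^{j-1} B_{i}(x-p) > p F_{O}$ explicitly gives the bound already flagged around the derivation of (9), confirming that the chain cannot voluntarily extend beyond $j^{*}$, because any officer past that rank would strictly prefer not to collude.

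Second, I would invoke the institutional structure of the problem: the passport is issued only when the verification file reaches the approving officer, so the citizen's payoff $U_{C}$ in equation (1) is realized only if the chain actually contains that officer. Therefore the chain cannot terminate \emph{before} the approving officer is reached, on pain of $U_{C}(B_{i}) = -B_{i} - pF_{C} + p_{h}f(B_{i},F_{O}) < 0$, which violates part 1 of Axiom 1. Combining this with the previous step, the approving officer must lie at some rank $k \le j^{*}$, so his participation payoff $U_{O_{k}}(B_{i})$ is strictly positive. Conversely, once the approving officer has signed, no further sharing is needed and no additional officer has an incentive to be added (this would only dilute shares and reduce individual utilities per equation (3)), so the chain terminates exactly at him.

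The main obstacle I anticipate is the voluntary-participation argument in the first step: one has to rule out the possibility that an officer with $U_{O_{k}} \le 0$ accepts anyway under side-payments or renegotiates a larger share, which would destabilise the equal-share $x$ used throughout equations (2)--(4) and in the Nash bargain (5). I would handle this by appealing to Axiom 1(2), which fixes the equal-share structure as a consequence of $x(1-x)^{j-1} \ngtr 1$, so any deviation is inconsistent with the equilibrium already derived; that pins $j^{*}$ down and lets the two endpoint conditions coincide, giving the theorem.
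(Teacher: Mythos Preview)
Your proposal is correct and shares the paper's core idea: use the monotone decrease of $U_{O_{j}}(B_{i}) = (1-x)^{j-1}B_{i}(x-p) - pF_{O}$ in $j$ (equation~(3)) to locate a last index with positive utility, and then argue that this index must coincide with the approving officer. The second step, however, is handled differently. Rather than invoking the citizen's participation constraint from Axiom~1 to force the chain to reach the approver (as you do), the paper argues directly from the other side: officer $n+1$, having negative utility, refuses the bribe \emph{and therefore refuses to approve the file}, so approval must already have occurred at rank~$n$; hence the last positive-utility officer \emph{is} the approving officer. Your route via $U_{C}$ is slightly more delicate, since the sign of $-B_{i} - pF_{C} + p_{h}(B_{i}+nF_{O})$ is not automatically negative without an additional restriction on $p_{h}$, whereas the paper's route sidesteps that computation entirely. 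Apart from this difference in the second half, the two arguments coincide.
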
 

\begin{proof}
Let the bribes are shared in the hierarchy of the ranks of the police officers at a constant share of bribe $(x)$. That is, the bribe amounts become increasingly small till the individual utility is positive ($U_{O_{n}} (B_{i})>0$, equation 3). The bribe will not be shared further as the $U_{O_{n+1}} (B_{i})<0$. The $(n+1)$th officer will not accept the bribe as his utility from the bribe is less than zero and will not approve the file for further processing as the bribe is not received. Thus, the $n$th police officer must be the approving officer in the network for the passport application to be completed and having the bribe received from the citizen. The $n$th police officer is one of the nodal officers in the network, his aggregate collection of bribes must be greater than his subordinates, for a given share of bribe $(x)$ in the network.
\end{proof}

\begin{theorem}
	The bribe networks are colluding.
\end{theorem}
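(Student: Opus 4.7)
The plan is to interpret ``colluding'' as the statement that the bribe-sharing arrangement laid out in equations (2)--(3) constitutes a stable internal equilibrium of the network, in which no officer in the chain $O_{1},\ldots,O_{n}$ has a profitable unilateral deviation (withholding, not forwarding, or reporting). Collusion is then equivalent to saying silence-and-share strictly dominates every feasible alternative for each officer simultaneously. With this reading, all the machinery needed to close the argument has already been established earlier in the paper.

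First I would invoke Theorem 1 to fix the architecture: the chain terminates at $O_{n}$, the approving officer, for whom $U_{O_{n}}(B_{i})>0$ while $U_{O_{n+1}}(B_{i})<0$, so every officer on the chain receives the strictly positive share $x(1-x)^{j-1}B_{i}$. Next I would enumerate the three natural deviations and show each is dominated. (i) If the interacting officer $O_{1}$ pockets the full $B_{i}$ instead of forwarding $(1-x)B_{i}$, the approving officer $O_{n}$ refuses to clear the file; since the citizen pays only against the passport, the bribe collapses and $O_{1}$ is left with expected payoff $-pF_{O}$, which is strictly worse than $U_{O_{1}}(B_{i})>0$. The same reasoning applies at any intermediate node $O_{j}$: withholding $(1-x)^{j}B_{i}$ severs the chain above and destroys the transaction. (ii) If any $O_{j}$ turns whistleblower, the associated payoff is at most $-F_{O}$ (since the officer is self-incriminated), again dominated by $U_{O_{j}}(B_{i})>0$ under Axiom 1 and inequality (9). (iii) If an officer attempts to operate alone, by Axiom 1 we have $U_{O_{j}}(B_{i})<U_{O}(B_{i})$, but more tellingly the structural asymmetry (only $O_{1}$ contacts the citizen, only $O_{n}$ signs approval) makes unilateral extraction infeasible.

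Putting these pieces together, the sharing scheme $x(1-x)^{j-1}B_{i}$ forms a Nash equilibrium of the internal officer game: every node is jointly necessary and, given the detection probabilities $p,p_{h}$ and the punishment $F_{O}$, individually rewarded. This mutual necessity combined with individual-rationality constraints is precisely what collusion means, and completes the proof.

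The main obstacle is that the paper does not specify the internal game in explicit extensive form, so ``colluding'' is only informally defined. I would address this by explicitly positing the three deviations above as the exhaustive set of off-equilibrium moves and checking each against the utilities in (3) together with the bounds in (9)--(10). Everything else reduces to substituting the equilibrium bribe size $B_{i}^{*}$ from (6) into the individual utilities and comparing signs, which is routine.
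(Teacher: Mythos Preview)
Your argument is internally coherent but it reads ``colluding'' differently from the paper. You prove \emph{internal} stability of the chain $O_{1},\ldots,O_{n}$ by ruling out three deviations for officers who already receive a share. The paper instead locates collusion at the \emph{boundary} of the network: it picks up from Theorem~1 and asks what the $(n{+}1)$th officer---senior enough to approve, aware of the bribe, yet with $U_{O_{n+1}}(B_{i})<0$ and hence no share of \emph{this} transaction---does with that information. Invoking the elasticity of the network (Axiom~2), the paper argues that this officer nonetheless supports the network so as to maximise his average utility across the portfolio of bribes; the resulting Nash equilibrium is ``$O_{n}$ does not forward, $O_{n+1}$ does not accept (and does not report),'' information stalls at level $n$, and the detection probability $p$ stays low. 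Collusion in the paper's sense is therefore between the active network and its supervisory fringe, and its payoff is the endogenously low $p$ that feeds the aggregate model in Section~5.1.2 and the comparative statics. Your approach buys a cleaner equilibrium justification of the sharing rule $x(1-x)^{j-1}B_{i}$ itself, but it omits the $(n{+}1)$th-officer step entirely, which is the paper's actual content here and its bridge to the persistence-of-corruption claim.

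Two of your deviation payoffs also rest on assumptions the paper never states: that a self-reporting officer earns ``at most $-F_{O}$'' and that the citizen pays only against delivery of the passport. Neither is derivable from equations~(1)--(4), so if you keep your route you must add them explicitly rather than cite Axiom~1 or inequality~(9), which do not speak to whistleblowing or to the sequencing of payment and approval.
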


\begin{proof}
From theorem 1, it can be inferred that the (n+1)th police officer is in higher or equivalent rank in the administration and has the power to approve the service request similar to (n)th police officer. Considering the (n+1)th police officer is aware of bribery incident cannot charge bribe as his utility from the bribe $U_{O_{n+1}} (B_{i})<0$. Again, having the bribery network is elastic (Axiom 2), the (n+1)th officer will support the bribery network, to maximize his average utility from the network. Thus, the (n)th police officer will not share the bribe further, and (n+1)th police officer will not accept the bribe is the Nash equilibrium for the bribe $B_{i}$. Hence, considering the flow of the information for the bribe $B_{i}$ gets interrupted at $(n)$th police officer. This implies that the probability of detection of bribe $B_{i}$ remains low in the police department at a higher level. Again, the bribery network being long the probability of detection remains low, equation (11). But if the (n+1)th officer is an honest or elite police officer (i.e., immune to bribery), the bribery network fails (Basu, Bhattacharya, and Mishra, 1992). Thus, colluding nature of the bribery networks, sensitive to the probability of detection and individual utility, will keep the networks active.
\end{proof}

\subsubsection{Bargaining setup to determine bribe size for the police officer $(O_{n})$ at aggregate level in a specified time period in the network}
Taking the argument forward from Theorem 2. Suppose there is an elite police officer appointed in the police system who is immune to bribe exchanges with his subordinates. Now let, the bribe size depends on the income difference $(D_{r})$ with the elite police officer in the chain (say, the (n+1)th police officer). Thus, the total bribe size in a time period (say, monthly, quarterly, or semi-monthly) from own collection of bribes is $\mathbb{B}_{i}$ and bribes collected from other police officers is $\mathbb{B}_{k}$.
\begin{equation}\begin{split}
D_{r} &\geq x\sum_{i=1}^{m_{1}}B_{i}+\sum_{i \ne k=1}^{m_{2}}x(1-x)^{r_{k}-1}B_{k} \\
\implies D_{r} &\geq x\sum_{i=1}^{m_{1}}B_{i}+ x \sum_{i \ne k=1}^{m_{2}}(1-x)^{r_{k}-1}B_{k} \\
\implies \dfrac{1}{x} D_{r} & \geq \mathbb{B}_{i} + \mathbb{B}_{k}
\end{split}
\end{equation}
where, $\mathbb{B}_{i}=\sum_{i=1}^{m_{1}}B_{i}$ and $\mathbb{B}_{k}=\sum_{i \ne k=1}^{m_{2}}(1-x)^{r_{k}-1}B_{k}=\sum_{i \ne k=1}^{m_{2}}[1-x(r_{k}-1)]B_{k}$ for $x\leq 1$
Given the bribery network exists, a Nash bargaining solution exists if there is gain in taking and giving bribe. Thus, if the solution exists it is given as the following, where $\mathbb{B}_{i}'$ and $\mathbb{B}_{k}'$ are the agreed bribe amounts.
\begin{equation}
\mathbb{B}_{i}'^{*} \equiv \underset{\mathbb{B}_{i}'}{\operatorname{argmax}} [\dfrac{1}{x} D_{r}-\mathbb{B}_{i}'][\mathbb{B}_{i}' - p \phi (\mathbb{B}_{i}',\mathbb{B}_{k}')] = \phi (\mathbb{B}_{i},\mathbb{B}_{k})
\end{equation}
Considering a linear relation in $ \phi (\mathbb{B}_{i}',\mathbb{B}_{k}')$ such that,
\begin{equation}
\phi (\mathbb{B}_{i}',\mathbb{B}_{k}') = \mathbb{B}_{i}' + \mathbb{B}_{k}'
\end{equation}
At equilibrium, considering $\dfrac{1}{x} D_{r} = \mathbb{B}_{i} + \mathbb{B}_{k}$ and equation (17)
\begin{equation}
\mathbb{B}_{i}' = \dfrac{\mathbb{B}_{k}'[1-p(1-x)]}{(1-2x)(1-p)}
\end{equation}
From (15) and (18), considering average size of bribe as $B_{i}^*$, the average minimum number of bribes to be collected by the sub-ordinate officers to satisfy the income difference with elite police officer is $m_{1}$.
\begin{equation}
\begin{split}
m_{1}&=\dfrac{1-p(1-x)}{p+(1-x)(2-3p)}\dfrac{D_{r}}{B_{i}^{*}} \\
&= \dfrac{[1-p(1-x)]D_{r}}{p+(1-x)(2-3p)} \dfrac{2[nx(1-p_{h})+x-p]}{nx[P+p(F_{O}-F_{C})+p_{h}nF_{O}]} \; [from \; (6)]
\end{split}
\end{equation}
The expression $m_{1}$ also represents the number of bribes to be collected by an individual sub-ordinate police officers at equilibrium targeting to meet an income level (say, income difference with an elite police officer).

\section{Comparative Statics}
The expression (19) presents important findings that can be studied under Comparative Statics of this paper, given $B_{i}^*>0$. The number of bribes to be collected by individual sub-ordinate police officer minimizes to zero (since $m_{1}\geq 0$) when the income difference with elite police officer $D_{r}=0$. The police officers will have a personal incentive to collect bribes as long as $D_{r}>0$. Considering the elite police officer is corrupted the income difference will remain positive with an active bribery network. If the elite police officer is not corrupted and ignorant of bribery incidents, the bribe collection will be active till the income difference is met. Thus, to maximize the income of the police officers in the bribery network it is desirable by the bribery network to corrupt the elite police officers. This tendency of the law enforcement authorities will drive the society in low-level equilibrium where bribing is a social norm.

The frequency of bribe collection decreases with the increase of probability of detection $(p)$ and harassment bribery $(p_{h})$, keeping the other probability as constant. The plots reveal the trend for certain parameter values, in figure 3 and figure 4. The trends are different with respect to the bribe size $B_{i}$, figure 1 and figure 2. This implies the frequency of bribe collection is indirectly proportional to the bribe size. This also reflects a similar relationship with a share of bribe and bribe size, equation 14. The findings from the plots also reflect that asymmetric punishment has a negative impact on the frequency of bribery incidents relative to symmetric punishment. Further, asymmetric punishment with an increase in the probability of detection of harassment bribery minimizes the incidents of bribery, figure 3. The argument can be explicitly followed by referring to figure 4. Thus, the combination of asymmetric punishment and initiatives to increase the probability of harassment will help to break the colluding bribery networks.

From the previous argument, it is followed that the presence of elite police officers may encourage bribery incidents. But the introduction of asymmetric punishment and increase of probability of harassment bribery (that is, reporting to elite police officers) will encourage the functioning of the law enforcement machinery to control incidents in society. This may result in an increase in bribe size, which will increase the probability of detection of bribery incidents, figure 1 and figure 2. Thus, society will tend towards the good equilibrium.

The incidence of bribery cases $(m_{1})$, can be a measure of trust in society. The surveillance on the incidence of bribery cases from the audit system of the police system and encouraging citizens to contribute can help in controlling the bribery incidents in society. The modern-day availability of digital technologies can help to report the incidence of bribery cases, thus improving the trust measure in society.

\begin{figure}
	\includegraphics[width=\linewidth]{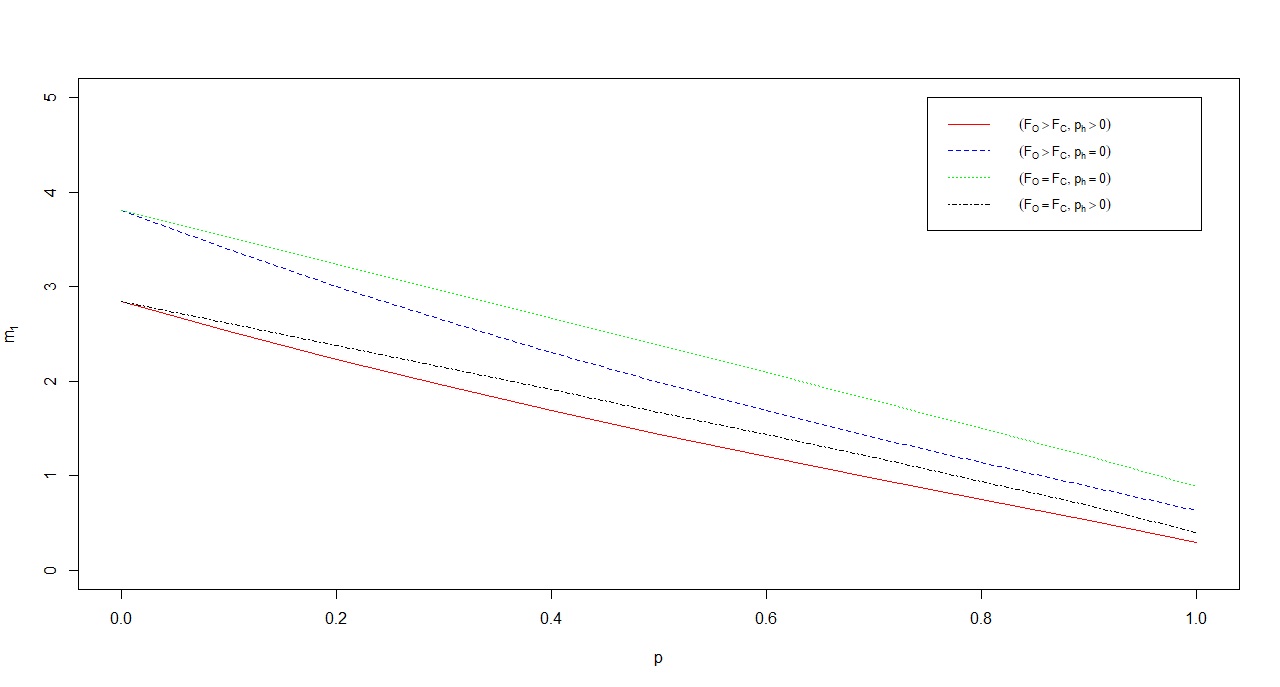}
	\caption{Equilibrium bribe number $(m_{1})$ as a function to probability of detection of bribe.}
	\label{fig:plot3}
\end{figure}

\begin{figure}
	\includegraphics[width=\linewidth]{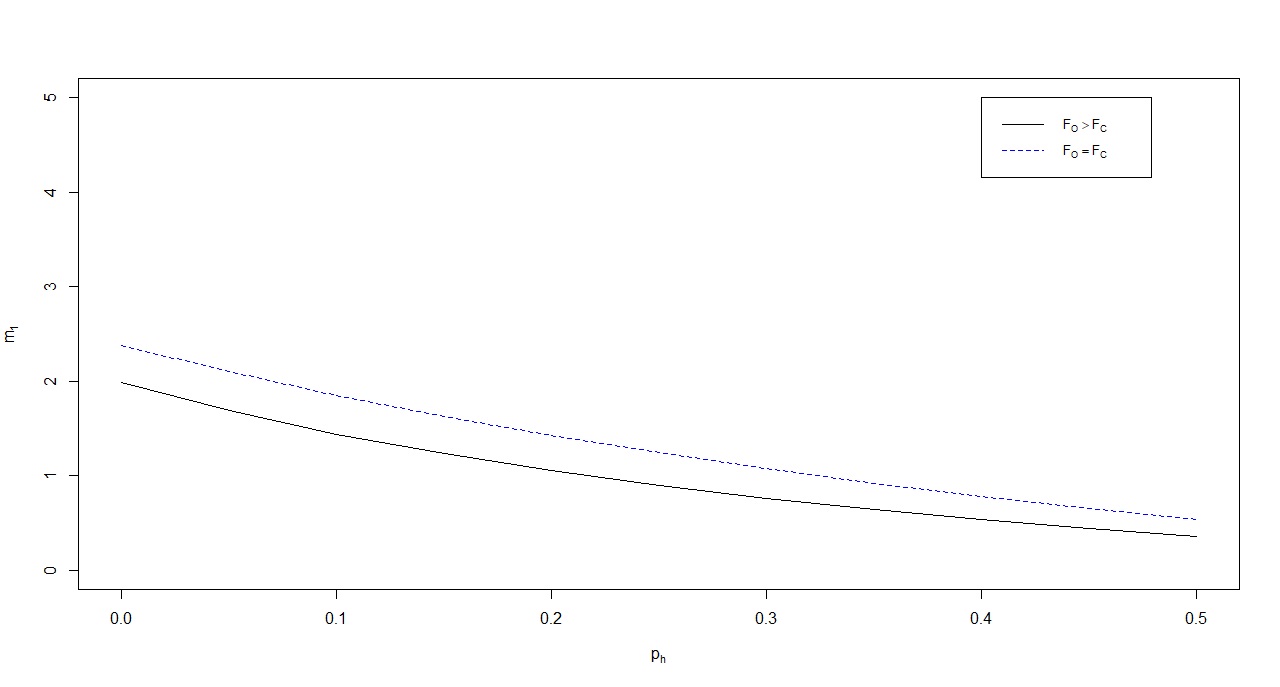}
	\caption{Equilibrium bribe number $(m_{1})$ as a function to detect probability of harassment $(p_{h})$ holding overall bribery $(p)$ constant at $0.5$.}
	\label{fig:plot3}
\end{figure}

\pagebreak

\section{Conclusion}

Corruption is inevitable in society unless there is a realization of its ill effects on society's welfare. One of the harmful consequences of corruption is the erosion of trust, negatively impacting economic welfare. Harassment bribery is an example of such an act. In a very corrupt society where bribing becomes a social norm, bribe-givers are also bribe-takers in their respective businesses. Society gets diverted towards a direction where there exist multiple equilibria with different levels of corruption. The mistrust within the society is harmful to the local business environment and has a negative impact on investments in the society. Thus, giving rise to poverty and other issues related to development. Having the importance of the problem the paper is focused on bribery with emphasis on harassment bribery to find ways to control corruption in society.

The analysis in this paper reflected the bribing behavior at an individual police officer and network level. The first part of the analysis in this paper concentrated on the bribe size concerning the utilities of the citizen, individual police officer, and the network. It reveals two important conclusions. Firstly, the bribe network ends with the police officer whose utility is greater than zero and the approving officer in the network. Secondly, the bribe networks are colluding. The bribery incidents survive only there is a strong network of support. Thus punishment needs to be directed towards the network to combat these types of corruption. The asymmetric form of corruption and reward to equivalent to the fine imposed to the network (for example, $(B_{i}+nF_{O})$, in this analysis) will encourage whistle-blowers to expose the bribery incidents.

The second part of the analysis is focused on the frequency of bribery incidents. Considering the individual police officer is driven by a target to satisfy a certain income level (say, income difference with an elite police officer in the network), how many bribes one needs to collect to meet the threshold in a certain time period? The analysis reveals that bribery incidents decrease with a decrease of income difference. Thus, an increase in income differences can be an equilibrium point, where police officers can increase their income from bribes. On the positive aspect, the bribery incidents decrease with the increase in probabilities of detection (including harassment bribery), which will increase the bribe size to meet the income differences. Again, an increase in bribe size increases the probability of detection, from the first part of the analysis. Thus, efforts to decrease bribery incidents will control corruption in society.

Finally, the paper emphasizes endogenizing the probability of detection of bribery incidents $(p)$ by including the probability of harassment bribe $(p_{h} \in p)$. The arguments to control corruption can be further strengthened by identifying major players in the bribery processes and analyze their best replies under the framework of closed under rational behavior as defined by Basu and Weibull (1991). The government or political will can be one of the major players to divert the society towards a good equilibrium. The idea is left for future research.


\begin{thebibliography}{100}
	
	\bibitem{Anant1990} Anant, T. C. A., Badal Mukherji, and Kaushik Basu. "Bargaining without convexity: Generalizing the Kalai-Smorodinsky solution." Economics Letters 33, no. 2 (1990): 115-119.
	
	\bibitem{Basu2016} Basu, Karna, Kaushik Basu, and Tito Cordella. "Asymmetric punishment as an instrument of corruption control." Journal of public economic theory 18, no. 6 (2016): 831-856.
	
	\bibitem{2018} Basu, Kaushik. The republic of beliefs: A new approach to law and economics. Princeton University Press, 2018.
	
	\bibitem{Basu2011}Basu, Kaushik. "Why, for a class of bribes, the act of giving a bribe should be treated as legal." Ministry of Finance. Govt. of India (2011).
	
	\bibitem{Basu1992} Basu, Kaushik, Sudipto Bhattacharya, and Ajit Mishra. "Notes on Bribery and the Control of Corruption." Journal of public Economics 48, no. 3 (1992): 349-359.
	
	\bibitem{Basu1991} Basu, Kaushik, and Jörgen W. Weibull. "Strategy subsets closed under rational behavior." Economics Letters 36, no. 2 (1991): 141-146.
	
	\bibitem{Becker1968} Becker, Gary S. "Crime and punishment: An economic approach." In The economic dimensions of crime, pp. 13-68. Palgrave Macmillan, London, 1968.
	
	\bibitem{Kalai1975} Kalai, Ehud, and Meir Smorodinsky. "Other solutions to Nash's bargaining problem." Econometrica: Journal of the Econometric Society (1975): 513-518.
	
	\bibitem{Lindgren1992} Lindgren, James. "Theory, history, and practice of the bribery-extortion distinction." U. Pa. L. Rev. 141 (1992): 1695.
	
	\bibitem{Mishra2006} Mishra, Ajit. "Persistence of corruption: some theoretical perspectives." World Development 34, no. 2 (2006): 349-358.
	
	\bibitem{Mishra2013}Mishra, Ajit, and Dilip Mookherjee. Controlling collusion and extortion: The twin faces of corruption. Working Paper, Boston University, 2013.
	
	\bibitem{Nash1950} Nash Jr, John F. "The bargaining problem." Econometrica: Journal of the econometric society (1950): 155-162.
	
	\bibitem{Transparency2019} Transparency International, "Corrpution Perception Index.", 2019, https://www.transparency.org/en/cpi/2019 
	
	\bibitem{Transparency2020} Transparency International, "Global Corruption Barometer: Asia 2020", 2020, https://www.transparency.org/en/publications/gcb-asia-2020 
	
	
\end{thebibliography}
\end{document}